\newtheorem{Remark}{Remark}
\newtheorem{Definition}{Definition}
\newtheorem{Theorem}{Theorem}
\newtheorem{Example}{Example}
\newtheorem{Lemma}[Theorem]{Lemma}
\journal{arXiv}
\begin{document}

\begin{frontmatter}


\title{Construction of Lyapunov Functions Using Vector Field Decomposition}

\author[mymainaddress]{Yuanyuan Liu\corref{mycorrespondingauthor}}
\cortext[mycorrespondingauthor]{Corresponding author}
\ead{liuyuanyuan@mail.nwpu.edu.cn}

\address[mymainaddress]{Northwestern Polytechnical University, Xi’an 710072, China.}


%
%

\begin{abstract}
In the present paper, a novel vector field decomposition based approach for constructing Lyapunov functions is proposed. For a given dynamical system, if the defining vector field admits a decomposition into two mutually orthogonal vector fields, one of which is curl-free and the other is divergence-free, then the potential function of the former can serve as a Lyapunov function candidate, since its positive definiteness will reflect the stability of the system. Moreover, under some additional conditions, its sublevel sets will give the exact attraction domain of the system. A sufficient condition for the existence of the proposed vector field decomposition is first obtained for $2$-dimensional systems by solving a partial differential equation and then generalized to $n$-dimensional systems. Furthermore, the proposed vector field decomposition always exists for linear systems and can be obtained by solving a specific algebraic Riccati equation.
\end{abstract}

\begin{keyword}
dynamical systems, Lyapunov functions, vector field decomposition, differential forms, algebraic Riccati equation
\end{keyword}

\end{frontmatter}

\section{Introduction}\label{Sec1}
As is well known, \emph{Lyapunov functions} (LFs) play a key role in the analysis of stability of dynamical systems. Since the Russian mathematician A. M. Lyapunov introduced the famous Lyapunov direct method in his seminal work entitled \emph{The General Problem of Stability of Motion} \cite{Lyapunov1892The}, finding effective methods for constructing LFs has become a central issue of modern control theory. 

The simplest and most commonly used methods may be the \emph{variable gradient method} and \emph{Krasovskii's method}. The variable gradient method assumes a certain form for the gradient of the unknown LF which usually depends on some unknown parameters that need to be determined by the properties of a LF. Although integrating the gradient may produce a LF, there is no unified approach for the selection of the gradient. It relies heavily on experiences and physical intuition. Krasovskii's method is essentially a linearization method, and hence the resulting LF usually can only be used in a neighborhood (which may be very small) of the equilibrium point. Another kind of effective method is the energy-based method, such as \emph{Energy-Casimir method} \cite{Haddad2007Nonlinear} and \emph{Hamiltonian function method} \cite{Wang2003Generalized,Maschke2000Energy}. However, there exist some complex dynamical systems to which none of the aforementioned methods can be applied and one has to seek for numerical methods. Commonly used numerical methods include the radial basis functions based method \cite{Giesl2007Construction}, the linear programming based method \cite{giesl2012construction}, and the iterative based method \cite{argaez2018iterative}. Other kinds of numerical methods can be found in \cite{matsue2017construction,hafstein2007algorithm,anghel2013algorithmic,giesl2019verification} and references therein. Numerical methods are very useful in practice, though, they provide little insight to the structure of dynamical systems. 

Apart from stability, there is another issue that is of great importance in the study of stability of dynamical systems, namely, the \emph{domain of attraction} (DA). Since in practice, it may happen that a dynamical system is stable but with very narrow stability margin, and becomes unstable once disturbed even by slightest external disturbance. However, finding the exact DA of a dynamical system is anything but easy. In fact, the only method noticed by us that does give an accurate characterization of the DA is given by Zubov \cite{zubov1964methods}. But it has to solve a partial differential equation (known as \emph{Zubov's equation}) subjecting to some additional conditions which is difficult in general. Thus, one turns to estimate the DA instead. For example, \emph{Zubov’s Method} \cite{zubov1964methods} estimates the DA of a dynamical system by approximately solving Zubov's equation using the series expansion of the LF. Generalizations of Zubov's method to systems subject to perturbations or control inputs can be found in \cite{camilli2001generalization,camilli2005calculating}. Traditionally, estimates of the DA are given by sublevel sets of LFs. But such estimates are usually too conservative, justifying the term \emph{inner estimates}. Less conservative estimates can be found in \cite{valmorbida2017region} which are given by positively invariant sets that are not necessarily sublevel sets of a LF. Sometimes, however, \emph{outer estimates} containing the DA are more useful in practice such as collision avoidance \cite{korda2014controller}. A method for finding outer estimates of the DA can be found in \cite{henrion2013convex}, where the DA is computed by solving an infinite-dimensional convex linear programming problem over the space of measures. This method is then extended to obtain both inner and outer estimates of the DA \cite{korda2014controller}. Computing both inner and outer estimates, compared to just one or the other, enables assessing the tightness of the estimates obtained and provides a valuable insight into achievable performance and/or safety of a given constrained control system. For a detailed discussion on the estimates of the DA, please refer to the book \cite{chesi2011domain}. 

In this paper, we shall propose a novel vector field decomposition based approach for constructing LFs. This approach stand a chance to be applied to a large class of dynamical systems (depends on the existence of the desired vector field decomposition). Also, it will provide some insights to the structure of the motion of the system. In fact, if the desired vector field decomposition does exist for a given dynamical system, then naturally there is a corresponding LF candidate and the motion of the system can be decomposed into two directions, one within the level surfaces of the LF candidate and the other orthogonal to them. Moreover, under some additional conditions, the DA of the dynamical system is fully characterized by a sublevel set of the LF. We shall also give a sufficient condition for the existence of the proposed vector filed decomposition for general nonlinear dynamical systems. This condition is first obtained for the $2$-dimensional cases by solving a partial differential equation and then generalized to the $n$-dimensional cases by treating the defining vector field as a differential $1$-form. Finally, we shall prove that for linear systems, the proposed vector field decomposition always exists and can be found explicitly by solving a specific algebraic Riccati equation. 

The organization of the paper is as follows. The proposed vector field decomposition based approach is presented in Section \ref{Sec2}. A sufficient condition for the existence of the proposed vector field decomposition for nonlinear systems is given in Section \ref{Sec3}. The existence of the proposed vector field decomposition for linear systems is proved in Section \ref{Sec4}. Finally, the conclusions are given in Section \ref{Sec5}. 

\section{A vector field decomposition based approach}\label{Sec2}

Consider a general nonlinear dynamical system of the form
\begin{equation}\label{NDS}
\dot{x}=f(x),
\end{equation}
where $x\in\mathbb{R}^n$ is the system state vector and $f:\mathbb{R}^n\to\mathbb{R}^n$ is a smooth map (the smoothness can be relaxed but we assume it for simplicity). To see the motivation behind our vector field decomposition based approach, observe that if the dynamical system \eqref{NDS} admits a smooth LF, say $V$, then the map $f$, viewed as a vector field (hereafter, called the defining vector field of the system \eqref{NDS}) on $\mathbb{R}^n$, can be naturally decomposed into the sum of two vector fields as
\begin{equation}\label{VFD}
f(x)=g(x)+h(x),\quad x\in\mathbb{R}^n,
\end{equation}
where $-g(x)=\nabla V(x),x\in\mathbb{R}^n$ is the \emph{gradient field} (and hence \emph{curl-free}) of $V$. Moreover, we have, by the definition of LF, 
\begin{equation}\label{DLF}
\left<\nabla V(x), f(x)\right>=-\left<g(x),g(x)+h(x)\right>\le 0,\quad x\in\mathbb{R}^n,
\end{equation}
where $\left<\cdot,\cdot\right>$ denotes the usual Euclidean inner product. Intuitively, equation \eqref{DLF} means that the sum vector field $f=g+h$ always point `inside' the level surfaces of $V$, i.e., the trajectories of \eqref{NDS} always move from higher level surfaces to lower ones, despite the vector field $h$ may point `inside', `outside', or tangent to the level surfaces. It is noteworthy, however, that in the critical case where the vector field $h$ tangent to the level surfaces of $V$, we have
\begin{equation}\label{OC}
\left<g(x),h(x)\right>=0,\quad x\in\mathbb{R}.
\end{equation}
and \eqref{DLF} always holds since $-\left<g(x),g(x)\right>$ is negative semidefinite.

The geometric meaning of \eqref{OC} is that the vector field $g$ is pointwise orthogonal to $h$ in $\mathbb{R}^n$. Therefore, we would like to decompose the vector field $f$ into a sum of two vector fields as $f=g+h$ such that $g$ is curl-free and orthogonal to $h$ everywhere in $\mathbb{R}^n$, i.e., equation \eqref{OC} holds. Moreover, in order to rule out the trivial solution $g(x)\equiv 0$, we shall also require $h$ to be \emph{divergence-free} (which implies that the vector fields $f$ and $g$ have the same divergence, a property that is not satisfied by the trivial solution $g(x)\equiv 0$ in general).

Suppose the defining vector field $f$ of system \eqref{NDS} has the proposed decomposition, then there exists a unique smooth function $V:\mathbb{R}^n\rightarrow\mathbb{R}$ satisfying $\nabla V(x)=-g(x)$ and $V(0)=0$ such that $V$ is decreasing along the trajectories of system \eqref{NDS}, since 
\begin{equation*}
\frac{\mathrm{d}}{\mathrm{d}t}V(x)=\nabla V(x)f(x)=\left<-g(x),g(x)+h(x)\right>=-\left\Vert g(x)\right\Vert^2\le 0,\quad\forall x\in\mathbb{R}^n.
\end{equation*}
Thus, if $V(x)$ happens to be positive definite, then it is a LF of the system.  Actually, we have

\begin{Theorem}\label{Thm1}
	Consider the dynamical system \eqref{NDS}. Suppose the defining vector field $f$ admits a decomposition $f=g+h$ in a neighborhood $U\subset\mathbb{R}^n$ of $0$ such that $g$ is curl-free, $h$ is divergence-free and $\left<h(x),g(x)\right>=0$ for all $x\in U$. Let $V:U\rightarrow\mathbb{R}$ be the smooth function satisfying $\nabla V(x)=-g(x)$ and $V(0)=0$.
	\begin{enumerate}[1)]
		\item If $V(x)$ is positive definite on $U\backslash\left\{0\right\}$, then the equilibrium point $x=0$ is Lyapunov stable. 
		\item If in addition, $g(x)\ne 0$ for all $x\in U\backslash\{0\}$, then the equilibrium point $x=0$ is asymptotically stable.
		\item Finally, if there exists a scalar $\varepsilon>0$ such that $g(x)\ne0$ for all $x\in\mathcal{B}_\varepsilon(0)\backslash\{0\}$, and for every sufficiently small $\delta>0$ there exists $x_0\in\mathbb{R}^n$ such that $\Vert x_0\Vert<\delta$ and $V(x_0)<0$, then the equilibrium point $x=0$ is unstable.
	\end{enumerate}
\end{Theorem}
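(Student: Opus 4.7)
The unifying engine for all three parts is the identity
\begin{equation*}
\dot V(x) \;=\; \langle \nabla V(x), f(x)\rangle \;=\; -\langle g(x),\, g(x)+h(x)\rangle \;=\; -\|g(x)\|^2,
\end{equation*}
obtained from $\nabla V=-g$ together with the orthogonality $\langle g,h\rangle =0$. Once this is in hand, parts 1) and 2) reduce to the classical Lyapunov direct method and part 3) to a Chetaev-style instability argument, so the plan is essentially: (i) derive the identity, (ii) apply the standard stability/asymptotic/instability theorems, taking a little extra care with the compactness step in the unstable case.

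For part 1), since $V$ is positive definite on $U\setminus\{0\}$ and $\dot V = -\|g\|^2 \leq 0$, the function $V$ is a (non-strict) Lyapunov function. The plan is the textbook argument: shrink to a ball $\overline{\mathcal{B}_\eta(0)}\subset U$, set $\alpha=\min_{\|x\|=\eta}V(x)>0$, and choose $\delta>0$ so that $V<\alpha$ on $\mathcal{B}_\delta(0)$; since $V$ is non-increasing along trajectories, any trajectory starting in $\mathcal{B}_\delta(0)$ is trapped in $\{V<\alpha\}\cap\mathcal{B}_\eta(0)$ and so cannot cross $\|x\|=\eta$.

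For part 2), the added hypothesis $g(x)\ne 0$ on $U\setminus\{0\}$ upgrades the identity to $\dot V(x)<0$ for $x\ne 0$, so $V$ is a strict Lyapunov function. I would then quote the classical asymptotic stability theorem; if I wanted to avoid citation, I would note that $V(x(t))$ is non-increasing and bounded below by $0$, hence converges to some $V_\infty\geq 0$, and a standard LaSalle-type compactness argument forces $V_\infty=0$, i.e.\ $x(t)\to 0$, because $\dot V$ is strictly negative on any compact annulus bounded away from $0$.

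The main obstacle is part 3), which needs a Chetaev-type construction. Consider the (nonempty, by hypothesis) set $\Omega=\{x\in\mathcal{B}_\varepsilon(0):V(x)<0\}$, whose boundary contains $0$ because $V(0)=0$. Given $\delta>0$, pick $x_0\in\Omega$ with $\|x_0\|<\delta$; since $\dot V\leq 0$, the forward trajectory $x(t)$ satisfies $V(x(t))\leq V(x_0)<0$ for all $t\geq 0$ it remains in $\mathcal{B}_\varepsilon(0)$. Continuity of $V$ together with $V(0)=0$ then provides $\rho>0$ such that $V>V(x_0)$ on $\mathcal{B}_\rho(0)$, so the trajectory is trapped in the compact set
\begin{equation*}
K \;=\; \{x:\rho\leq\|x\|\leq\varepsilon,\; V(x)\leq V(x_0)\}.
\end{equation*}
On $K$ the continuous function $\|g(x)\|^2$ is strictly positive (since $g\ne 0$ on $\mathcal{B}_\varepsilon(0)\setminus\{0\}$) and hence has a positive minimum $c>0$. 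If the trajectory never exited $\mathcal{B}_\varepsilon(0)$, then $\dot V\leq -c$ for all $t\geq 0$, forcing $V(x(t))\to-\infty$, which contradicts the boundedness of $V$ on the compact set $\overline{\mathcal{B}_\varepsilon(0)}$. Therefore $x(t)$ must leave $\mathcal{B}_\varepsilon(0)$, which establishes instability. The delicate point is simply making sure the annular region $K$ is truly compact and stays inside $U$; shrinking $\varepsilon$ at the outset if necessary handles this.
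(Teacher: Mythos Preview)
Your proposal is correct and follows the same route as the paper: both hinge on the identity $\dot V=-\|g\|^2$ obtained from $\nabla V=-g$ and $\langle g,h\rangle=0$, after which the paper simply cites Lyapunov's stability theorem for 1) and 2) and Lyapunov's instability theorem for 3), whereas you have written out the underlying textbook arguments (including the Chetaev-type escape argument) in full. The only cosmetic point is that in part 3) your set $K$ uses the closed annulus $\rho\le\|x\|\le\varepsilon$ while the hypothesis gives $g\ne 0$ on the open ball; shrinking $\varepsilon$ slightly (as you already suggest) or working with $\varepsilon'<\varepsilon$ cleanly resolves this.
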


\begin{proof}
	1) and 2) follow directly from Lyapunov's stability theorem (see \cite{Haddad2007Nonlinear}, Theorem 3.1), while 3) follows directly from Lyapunov's instability theorem (see \cite{Haddad2007Nonlinear} Theorem 3.12).
\end{proof}

It seems that the proposed scheme not only complicates the searching for a scalar function $V$ satisfying Lyapunov's stability theorem, but also strongly restricts the set of systems and LFs the scheme works for. This issue is clear for the case of linear systems studied in Section \ref{Sec4} where the method reduces to solve a restricted case of the algebraic Lyapunov equation. Nevertheless, it has the virtue that it gives an accurate characterization of the DA. Indeed, we have
\begin{Theorem}\label{Thm2}
	Assume the defining vector field $f$ of \eqref{NDS} admits a decomposition $f=g+h$ as in Theorem \ref{Thm1} and $V:\mathbb{R}^n\rightarrow\mathbb{R}$ be the smooth function satisfying $\nabla V(x)=-g(x)$ and $V(0)=0$. Let $c\in\mathbb{R}$ be positive and $\Omega_c$ the path component of $\{x:V(x)<c\}$ containing the equilibrium point $x=0$. If the closure $\bar{\Omega}_c\subset U$,
	\begin{equation*}
	\begin{split}
	g(x)&=0,\quad x\in\partial\Omega_c,\\
	g(x)&\ne 0,\quad x\in\Omega_c\backslash\{0\},\\
	V(x)&>0,\quad x\in\Omega_c\backslash\{0\},
	\end{split}
	\end{equation*}
	and $V$ is radially unbounded, then $\Omega_c$ is the DA of system \eqref{NDS}. 
\end{Theorem}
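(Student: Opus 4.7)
The target is to prove $\Omega_c=\mathrm{DA}$, which I would split into $\Omega_c\subseteq\mathrm{DA}$ and $\mathrm{DA}\subseteq\Omega_c$. For the first inclusion, begin by establishing positive invariance of $\Omega_c$: because $\dot V=-\|g\|^2\le 0$, any trajectory starting at $x_0\in\Omega_c$ satisfies $V(x(t))\le V(x_0)<c$, so it remains in $\{V<c\}$; since this open set is locally path-connected, its path components are open, and a continuous trajectory cannot jump between disjoint open sets, so it stays in the component $\Omega_c$ containing $x_0$. Radial unboundedness of $V$ makes $\bar\Omega_c$ compact, guaranteeing global existence of solutions on $[0,\infty)$ inside $\Omega_c$. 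The hypotheses $g\ne 0$ and $V>0$ on $\Omega_c\setminus\{0\}$ together with part~2 of Theorem~\ref{Thm1} (applied with $U=\Omega_c$) then yield asymptotic stability of the origin relative to $\Omega_c$; hence $\Omega_c\subseteq\mathrm{DA}$.

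The reverse inclusion is the crux. The plan is to show that the boundary $\partial\Omega_c$ is invariant under the flow and to rule out any trajectory crossing it. The key observation is that $h$ is tangent to $\partial\Omega_c$: inside $\Omega_c\setminus\{0\}$, the orthogonality $\langle g,h\rangle=0$ with $g=-\nabla V\ne 0$ forces $h$ to be tangent to every level set $\{V=\alpha\}$, and these level sets accumulate on $\partial\Omega_c$ as $\alpha\to c^-$; by continuity of $h$, the tangency passes to the limit. Since $g=0$ on $\partial\Omega_c$, the dynamics there reduce to $\dot x=h(x)$ with $h$ tangent to $\partial\Omega_c$, so by uniqueness of ODE solutions any trajectory meeting $\partial\Omega_c$ stays in $\partial\Omega_c$ for all time, both forward and backward.

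Granting the invariance of $\partial\Omega_c$, take $x_0\in\mathrm{DA}\setminus\Omega_c$. If $x_0\in\partial\Omega_c$, the trajectory lies in $\partial\Omega_c$ and therefore $V(x(t;x_0))\equiv c$, contradicting $V(x(t))\to V(0)=0$. If instead $x_0\in\mathbb{R}^n\setminus\bar\Omega_c$, the forward trajectory cannot enter $\bar\Omega_c$: otherwise it would reach some $x(T)\in\partial\Omega_c$, and backward uniqueness at $x(T)$ would force $x_0\in\partial\Omega_c$, contradicting the choice of $x_0$. Hence the trajectory remains in $\mathbb{R}^n\setminus\Omega_c$, bounded away from the open set $\Omega_c$ containing $0$, again contradicting $x(t)\to 0$. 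The main obstacle I anticipate is making the tangency of $h$ to $\partial\Omega_c$ rigorous: because $\nabla V$ vanishes on $\partial\Omega_c$, this set is a critical level of $V$ and need not be a smooth hypersurface, so the limiting argument from the regular level sets inside $\Omega_c$, and the subsequent reduction of $\dot x=h(x)$ to an intrinsic tangent vector field, must be handled with care or rely on an implicit regularity assumption on $\partial\Omega_c$.
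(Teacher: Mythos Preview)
Your overall architecture matches the paper's: prove both inclusions, with the reverse inclusion hinging on the tangency of $h$ to the level sets of $V$ forced by $\langle g,h\rangle=0$. Two points deserve comment.

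For $\Omega_c\subseteq\mathrm{DA}$, invoking part~2 of Theorem~\ref{Thm1} with $U=\Omega_c$ gives only \emph{local} asymptotic stability; it does not by itself yield $\Omega_c\subseteq\mathrm{DA}$. You still need an argument that every trajectory starting in $\Omega_c$ actually reaches a small neighborhood of the origin. The paper does this quantitatively: for $\varepsilon\le c'<c$, the set $\bar\Omega_{c'}\setminus\Omega_\varepsilon$ is compact (by radial unboundedness) and $\|g\|^2\ge\delta>0$ there, so $\dot V<-\delta$ and the trajectory enters $\Omega_\varepsilon$ within time $(c'-\varepsilon)/\delta$. A LaSalle argument would also work, but the step cannot be skipped.

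For $\mathrm{DA}\subseteq\Omega_c$, the paper uses the same tangency idea but in the opposite direction, and this neatly dissolves the regularity worry you flag. Rather than passing to the limit $\alpha\to c^-$ to conclude that $h$ is tangent to the (possibly singular) set $\partial\Omega_c$, the paper supposes a trajectory crosses $\partial\Omega_c$ at some point $x(t)$; since $g(x(t))=0$, the vector $h(x(t))=f(x(t))$ is transversal to $\partial\Omega_c$ there. By continuity of $h$, for small $\varepsilon>0$ the vector $h(x(t+\varepsilon))$ is still transversal to the level set $\{V=c'\}$ with $c'=V(x(t+\varepsilon))<c$. But this level set lies inside $\Omega_c\setminus\{0\}$, where $g\ne0$, so it \emph{is} a regular hypersurface and $\langle g,h\rangle=0$ forces $h$ to be tangent to it---contradiction. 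In short, the paper exports the hypothetical transversality from the bad boundary to a good interior level set, instead of importing tangency from the good level sets to the bad boundary; the contradiction is then obtained on a set where ``tangent'' has its usual meaning.
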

\begin{proof}
	First of all, the equilibrium point $x=0$ is asymptotically stable according to Theorem \ref{Thm1}. Let $\mathcal{D}_a$ be the DA of system \eqref{NDS}. Since $V(x)$ is positive definite on $\Omega_c\backslash\{0\}$, there exists $\varepsilon>0$ such that $\Omega_\varepsilon\coloneqq\{x\in\Omega_c:V(x)<\varepsilon\}\subseteq\mathcal{D}_a$. To prove that $\Omega_c=\mathcal{D}_a$, we first show that $\Omega_c\subseteq \mathcal{D}_a$. To this end, it suffices to show that for each $\varepsilon\le c^\prime<c$ and $x_0\in\bar{\Omega}_{c^\prime}\backslash \Omega_\varepsilon$, the trajectory $x(t)$ issued from $x_0$, i.e., that satisfies $x(0)=x_0$, will enter $\Omega_\varepsilon$ in a finite time $t$. By the radially unboundedness of $V$, the set $\bar{\Omega}_{c^\prime}\backslash\Omega_\varepsilon$ is compact. Thus, there exists $\delta>0$ such that $\Vert g(x)\Vert^2>\delta$ for all $x\in\bar{\Omega}_{c^\prime}\backslash\Omega_\varepsilon$, since $g(x)\ne 0$ on this set. Consequently, $x(t)$ will enter $\Omega_\varepsilon$ at $t=(c'-\varepsilon)/\delta$ since we have
	\begin{equation*}
	\frac{\mathrm{d}}{\mathrm{d}t}V(x)=\left<\nabla V(x),f(x)\right>=-\Vert g(x)\Vert^2<-\delta
	\end{equation*}
	and $V(x)\le c^\prime$ in $\bar{\Omega}_{c^\prime}\backslash\Omega_\varepsilon$.
	
	Next, we prove $\mathcal{D}_a\subseteq\Omega_c$. This can be achieved by showing that for each $x_0\in U\backslash\Omega_c$ the trajectory $x(t)$ issued from $x_0$ cannot across the boundary $\partial\Omega_c$ of $\Omega_c$ in any finite time. Suppose on the contrary there does exist a finite $t>0$ such that $x(t)$ crosses the boundary $\partial\Omega_c$ into $\Omega_c$, then the vector field $f$ intersects $\partial\Omega_c$ transversally at $x(t)$. Since $f=g+h$ and $g(x)=0$ at $x(t)$, we conclude that the vector field $h$ intersects $\partial\Omega_c$ transversally at $x(t)$. But then for a sufficiently small $\varepsilon>0$, the vector field $h$ will intersects the lever surface $\{x\in U:V(x)=c^\prime\}$ transversally at $x(t+\varepsilon)$, where $c^\prime=V(x(t+\varepsilon))<c$. This violates the orthogonality of vector fields $g$ and $h$, a contradiction.
\end{proof}

\begin{Example}\label{Exa1}
	Consider the following nonlinear dynamical system adopted from \cite{KHALIL2002Nonlinear},
	\begin{equation*}
	\begin{split}
	\dot{x}_1=-x_1[1+(x_1^2+x_2^2)-(x_1^2+x_2^2)^2]+x_2,\\
	\dot{x}_2=-x_2[1+(x_1^2+x_2^2)-(x_1^2+x_2^2)^2]-x_1.
	\end{split}
	\end{equation*}
	The defining vector field $f$ of it has the desired decomposition $f=g+h$ with 
	\begin{equation*}
	g=-[1+(x_1^2+x_2^2)-(x_1^2+x_2^2)^2](x_1,x_2)^T
	\end{equation*}
	curl-free and 
	$h=(x_2,-x_1)^T$
	divergence-free. Thus, by Theorem \ref{Thm1}, the function
	\begin{equation*}
	V(x)=\frac{1}{2}(x_1^2+x_2^2)\left[1+\frac{1}{2}(x_1^2+x_2^2)-\frac{1}{3}(x_1^2+x_2^2)^2\right] 
	\end{equation*}
	is a LF of the system and the equilibrium point $x=0$ is asymptotically stable. Let $U=\{x\in\mathbb{R}^2:x_1^2+x_2^2<\frac{3+\sqrt{57}}{4}\}$ and $\Omega=\{x\in\mathbb{R}^2:x_1^2+x_2^2<\frac{1+\sqrt{5}}{2}\}$. Then $V$ is positive definite on $U\backslash\{0\}$ and $g(x)=0$ for $x\in\partial\Omega$. Moreover, $\Omega$ is the path component of a sublevel set of $V$ that contains $x=0$, and the conditions of Theorem \ref{Thm2} are satisfied. Therefore, $\Omega$ is the DA of the system. This can be seen via the phase portrait of the system.
\end{Example}

At this point, a natural question is whether the decomposition required by Theorem \ref{Thm1} and \ref{Thm2} exists. It is a well known result in physics (especially in electromagnetics) that one can always decompose a vector field into the sum of a divergence-free and a curl-free vector field, and such decomposition is not unique. But with the additional orthogonality condition \eqref{OC}, the question turns out to be much harder. Nevertheless, we shall give a sufficient condition for the existence of the proposed decomposition in the next section. Furthermore, a complete answer for linear systems will be given in Section \ref{Sec4}.

\section{A sufficient condition for nonlinear systems}\label{Sec3}

In this section, we give a sufficient condition for the existence of the proposed vector field decomposition for general nonlinear systems. To this end, we first address the simpler $2$-dimensional cases and then generalize the result to $n$-dimensional cases.

\subsection{$2$-dimensional cases}
Given a smooth vector field $f$ on $\mathbb{R}^2$, suppose it can be decomposed into the sum of two smooth vector fields as $f=g+h$, where $g$ is curl-free, $h$ is divergence-free, and $\left<g(x),h(x)\right>=0$ for all $x\in\mathbb{R}^2$. Then, by the irrotationality of $g$, i.e.,
\begin{equation*}
\frac{\partial g_1}{\partial x^2}=\frac{\partial g_2}{\partial x^1},
\end{equation*}
there exists (see the Poincar\'e Lemma in \cite{Lee2012Introduction}) a scaler function $\Phi:\mathbb{R}^2\to\mathbb{R}$ such that 
\begin{equation}\label{g}
g_1=\frac{\partial\Phi}{\partial x^1}\quad\text{and}\quad g_2=\frac{\partial\Phi}{\partial x^2},
\end{equation} 
where $g_i$ and $x^i,i=1,2$ denote the $i$-th component of $g$ and $x$, respectively. Similarly, there exists a scalar function $\Psi:\mathbb{R}^2\rightarrow\mathbb{R}$ such that
\begin{equation}\label{h}
h_1=-\frac{\partial \Psi}{\partial x^2}\quad\text{and}\quad h_2=\frac{\partial \Psi}{\partial x^1},
\end{equation}
since $h$ is divergence-free. Plugging \eqref{g} and \eqref{h} into $\left<g,h\right>=0$ we find that 
\begin{equation}\label{PDE}
-\frac{\partial\Phi}{\partial x^1}\frac{\partial\Psi}{\partial x^2}+\frac{\partial\Phi}{\partial x^2}\frac{\partial\Psi}{\partial x^1}=0,\quad x\in\mathbb{R}^2.
\end{equation}

If the vector field $g$ is nondegenerate on an open set $U\subset\mathbb{R}^2$, then considered as a partial differential equation of $\Psi$, the classical differential equation theory tells us that the general solution of equation \eqref{PDE} can be found by integrating the corresponding characteristic system of ordinary differential equation (see \cite{Grigoriev2010Symmetries,Olver1986Applications} for details), i.e., 
\begin{equation}\label{ODE}
\frac{dx^1}{{\partial\Phi}/{\partial x^2}}=\frac{dx^2}{-{\partial\Phi}/{\partial x^1}},\quad x\in U.
\end{equation}
Equation \eqref{ODE} can be solved explicitly and the solution is 
\begin{equation*}
\Phi(x)=c,\quad x\in U,
\end{equation*}
where $c$ is the constant of integration. And any solution of \eqref{PDE} can be written as a function of $c$, i.e., $\Psi(x)=\alpha(c)=\alpha(\Phi(x))$ for a smooth function $\alpha:\mathbb{R}\rightarrow\mathbb{R}$. 
Therefore, we have
\begin{equation}\label{SC1}
\begin{split}
f_1&=\frac{\partial\Phi}{\partial x^1}-\alpha'(\Phi)\frac{\partial\Phi}{\partial x^2},\\
f_2&=\frac{\partial\Phi}{\partial x^2}+\alpha'(\Phi)\frac{\partial\Phi}{\partial x^1},\\
\end{split}
\end{equation} 
where $\alpha'$ denotes the derivative of $\alpha$ with respect to $\Phi$. Clearly, if a vector field $f$ can be written in the form of \eqref{SC1}, then it has the desired decomposition. 

On the other hand, if $h$ is nondegenerate on an open set $U$, then by the same argument we will find that $\Phi=\beta(\Psi)$ on $U$ for a smooth function $\beta:\mathbb{R}\to\mathbb{R}$ and consequently, we have
\begin{equation}\label{SC2}
\begin{split}
f_1&=\beta'(\Psi)\frac{\partial\Psi}{\partial x^1}-\frac{\partial\Psi}{\partial x^2},\\
f_2&=\beta'(\Psi)\frac{\partial\Psi}{\partial x^2}+\frac{\partial\Psi}{\partial x^1}.
\end{split}
\end{equation}
Combining equations \eqref{SC1} and \eqref{SC2} yields the following sufficient condition for the existence of the proposed vector field decomposition.

\begin{Theorem}\label{Thm3}
	Let $f=(f_1,f_2)^T$ be a smooth vector field on $\mathbb{R}^2$. If there exist smooth functions $\Theta:\mathbb{R}^2\rightarrow\mathbb{R}$ and $\alpha,\beta:\mathbb{R}\rightarrow\mathbb{R}$ such that
	\begin{equation}\label{SC}
	\begin{split}
	f_1&=\beta(\Theta)\frac{\partial\Theta}{\partial x^1}-\alpha(\Theta)\frac{\partial\Theta}{\partial x^2},\\
	f_2&=\beta(\Theta)\frac{\partial\Theta}{\partial x^2}+\alpha(\Theta)\frac{\partial\Theta}{\partial x^1}
	\end{split}
	\end{equation}
	then $f$ admits the proposed decomposition $f=g+h$ with $g=\beta(\Theta)(\frac{\partial\Theta}{\partial x^1},\frac{\partial\Theta}{\partial x^2})^T$ and $h=\alpha(\Theta)(-\frac{\partial\Theta}{\partial x^2},\frac{\partial\Theta}{\partial x^1})^T$.  
\end{Theorem}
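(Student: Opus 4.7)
The statement is purely a verification, so my plan is to directly check that the vector fields $g$ and $h$ defined in the conclusion satisfy the four properties required by the proposed decomposition: (i) $f=g+h$, (ii) $g$ is curl-free, (iii) $h$ is divergence-free, and (iv) $\langle g,h\rangle = 0$ pointwise. Property (i) is visible by inspection of the formulas for $f_1, f_2$ in \eqref{SC}, so no real work is needed there.

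For property (ii), the cleanest route is to introduce a smooth antiderivative $B:\mathbb{R}\to\mathbb{R}$ of $\beta$, so that $B'=\beta$; such a $B$ exists because $\beta$ is smooth on $\mathbb{R}$. The chain rule then gives
\begin{equation*}
\nabla(B\circ\Theta)(x) \;=\; B'(\Theta(x))\,\nabla\Theta(x) \;=\; \beta(\Theta(x))\,\nabla\Theta(x) \;=\; g(x),
\end{equation*}
so $g$ is a gradient field and hence curl-free. This also identifies the potential $\Phi=B\circ\Theta$ used earlier in this section. For property (iii), I would take a smooth antiderivative $A$ of $\alpha$ and observe that $h = (-\partial_2(A\circ\Theta),\,\partial_1(A\circ\Theta))^T$, i.e., $h$ is the Hamiltonian (skew-gradient) field generated by $A\circ\Theta$. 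Its divergence is $-\partial_1\partial_2(A\circ\Theta) + \partial_2\partial_1(A\circ\Theta)$, which vanishes identically by equality of mixed partials.

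Property (iv) is the geometric content of the theorem: $g$ is a scalar multiple of $\nabla\Theta$ while $h$ is the same $\nabla\Theta$ rotated by $\pi/2$ and rescaled, so the inner product vanishes at every point. A one-line direct computation
\begin{equation*}
\langle g,h\rangle \;=\; \alpha(\Theta)\beta(\Theta)\left(-\frac{\partial\Theta}{\partial x^1}\frac{\partial\Theta}{\partial x^2} + \frac{\partial\Theta}{\partial x^2}\frac{\partial\Theta}{\partial x^1}\right) \;=\; 0
\end{equation*}
confirms this without needing any smoothness assumption on $\Theta$ beyond what is already given. I do not expect any genuine obstacle: the theorem merely packages the derivation that produced the ansatz \eqref{SC} into a clean sufficient condition, and the substantive content of the argument has already been carried out in passing from \eqref{PDE} to \eqref{SC1}--\eqref{SC2}.
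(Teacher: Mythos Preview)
Your proposal is correct. The paper does not supply a separate proof for this theorem; it treats the statement as the summary of the derivation leading from \eqref{PDE} through \eqref{SC1}--\eqref{SC2}, where the potentials $\Phi$ and $\Psi$ are already identified as functions of a common $\Theta$. Your direct verification---introducing antiderivatives $B,A$ of $\beta,\alpha$ so that $g=\nabla(B\circ\Theta)$ and $h$ is the Hamiltonian field of $A\circ\Theta$---is exactly the content implicit in that derivation (with $\Phi=B\circ\Theta$ and $\Psi=A\circ\Theta$), made explicit and self-contained.
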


For instance, the defining vector field of the nonlinear system given in Example \ref{Exa2} satisfies the conditions of Theorem \ref{Thm3}, and hence admits the proposed decomposition as we have seen before.

\begin{Remark}
	It is clear that any smooth vector field $f$ on $\mathbb{R}^2$ can be written in the form
	\begin{equation}\label{ANY}
	\begin{split}
	f_1&=\beta(x)\frac{\partial\Theta}{\partial x^1}-\alpha(x)\frac{\partial\Theta}{\partial x^2},\\
	f_2&=\beta(x)\frac{\partial\Theta}{\partial x^2}+\alpha(x)\frac{\partial\Theta}{\partial x^1},\\
	\end{split}
	\end{equation}
	for some smooth functions $\Theta:\mathbb{R}^2\to\mathbb{R}$ and $\alpha,\beta:\mathbb{R}^2\to\mathbb{R}$. What Theorem \ref{Thm3} tells us is that if the function $\Theta$ can be chosen in a manner such that $\alpha$ and $\beta$ are constant on each level set of $\Theta$, then the existence of the proposed decomposition is guaranteed. The question is, under this restriction, whether equation \eqref{ANY} is still sufficient to represent all vector fields on $\mathbb{R}^2$.
\end{Remark}

\subsection{$n$-dimensional cases}
In order to generalize Theorem \ref{Thm3} to $n$-dimensional cases, we have to introduce some more notations. For high dimensional cases, it is more convenient to treat a vector filed $f$ on $\mathbb{R}^n$ as a \emph{differential $1$-form}, i.e.,
\begin{equation*}
f=\sum_{i=1}^nf_i\mathrm{d}x^i.
\end{equation*}
Under this new notation, the curl of $f$ can be written as 
\begin{equation}\label{curl}
\mathrm{curl}f=\mathrm{d}f=\sum_{i=1}^n\sum_{j=1}^n\frac{\partial f_i}{\partial x^j}\mathrm{d}x^j\wedge\mathrm{d}x^i=\sum_{1\le i<j\le n}\left(\frac{\partial f_i}{\partial x^j}-\frac{\partial f_j}{\partial x^i}\right)\mathrm{d}x^j\wedge\mathrm{d}x^i,
\end{equation}
where $\mathrm{d}$ is the \emph{exterior derivative operator}. If $f$ is curl-free, i.e., $\mathrm{d}f=0$, then there exist a $0$-form (i.e., a scaler function) $\Phi$ such that $f=\mathrm{d}\Phi$. 

In addition, let $\mathrm{d}^\ast$ be the $L^2$-adjoint of the exterior derivative $\mathrm{d}$, then the divergence of $f$ can be written as
\begin{equation}\label{div}
\begin{split}
\mathrm{div}f&=-\mathrm{d}^\ast f=\ast\mathrm{d}\ast f\\
&=\ast\mathrm{d}\sum_{i=1}^n(-1)^{i-1}f_i\mathrm{d}x^1\wedge\cdots\wedge\widehat{\mathrm{d}x^i}\wedge\cdots\wedge\mathrm{d}x^n\\
&=\ast\sum_{i=1}^n\frac{\partial f_i}{\partial x^i}\mathrm{d}x^1\wedge\cdots\wedge\mathrm{d}x^n\\
&=\sum_{i=1}^n\frac{\partial f_i}{\partial x^i},
\end{split}
\end{equation} 
where $\ast$ is the \emph{Hodge star operator} (the precise definition of $\mathrm{d}^\ast$ and $\ast$ as well as their relationship can be found in \cite{Jost2011Riemannian}). If $f$ is divergence free, i.e., $\mathrm{d}^\ast f=0$, then there exists a $2$-form 
\begin{equation*}
\psi = \sum_{1\le i<j\le n}\Psi_{ij}\mathrm{d}x^i\wedge\mathrm{d}x^j
\end{equation*}
such that $f=\mathrm{d}^\ast \psi$. Now we claim:

\begin{Theorem}\label{Thm4}
	Let $f=\sum_{i=1}^nf_i\mathrm{d}x^i$ be a vector field on $\mathbb{R}^n$, if there exist smooth functions $\Theta:\mathbb{R}^n\rightarrow\mathbb{R}$, $\alpha_{ij}:\mathbb{R}\rightarrow\mathbb{R}$, and $\beta:\mathbb{R}\to\mathbb{R}$ such that
	\begin{equation*}
	f = \mathrm{d}\Phi+\mathrm{d}^\ast\psi
	\end{equation*}
	hold with
	\begin{equation}\label{Phi_Psi}
	\Phi=\beta(\Theta)\quad\text{and}\quad\Psi_{ij}=\alpha_{ij}(\Theta),
	\end{equation}
	then $f$ admits the proposed decomposition $f=g+h$ with $g=\mathrm{d}\Phi$ and $h=\mathrm{d}^\ast\psi$.
\end{Theorem}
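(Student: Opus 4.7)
The plan is to verify the three required properties of the candidate decomposition $f=g+h$ with $g=\mathrm{d}\Phi$ and $h=\mathrm{d}^\ast\psi$. The first two come for free from the basic calculus of differential forms. Since $g$ is exact, it is closed, so $\mathrm{curl}\,g=\mathrm{d}g=\mathrm{d}(\mathrm{d}\Phi)=0$ by $\mathrm{d}^2=0$. Dually, since $h$ is co-exact, it is co-closed, so $\mathrm{div}\,h=-\mathrm{d}^\ast h=-\mathrm{d}^\ast(\mathrm{d}^\ast\psi)=0$, where $(\mathrm{d}^\ast)^2=0$ follows from $\mathrm{d}^\ast=\pm\ast\mathrm{d}\ast$, $\mathrm{d}^2=0$, and the fact that $\ast\ast=\pm\mathrm{id}$. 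So curl-freeness of $g$ and divergence-freeness of $h$ hold automatically and use no part of the structural assumption \eqref{Phi_Psi}.

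The only content is the pointwise orthogonality $\langle g,h\rangle=0$, and this is where the ansatz \eqref{Phi_Psi} is used. First I would write $g$ in coordinates: $\Phi=\beta(\Theta)$ gives
\begin{equation*}
g_i=\frac{\partial\Phi}{\partial x^i}=\beta'(\Theta)\,\frac{\partial\Theta}{\partial x^i}.
\end{equation*}
Next I would apply the coordinate formula from \eqref{div} to the 2-form $\psi$, extending the coefficients antisymmetrically (so that $\Psi_{ji}=-\Psi_{ij}$ and $\Psi_{ii}=0$) in order to sum freely over all index pairs. A direct computation analogous to \eqref{div} yields
\begin{equation*}
h_i=-\sum_{j=1}^{n}\frac{\partial\Psi_{ij}}{\partial x^j}=-\sum_{j=1}^{n}\alpha'_{ij}(\Theta)\,\frac{\partial\Theta}{\partial x^j},
\end{equation*}
where $\alpha_{ij}$ is likewise extended antisymmetrically, so that $\alpha'_{ij}$ is antisymmetric in $(i,j)$.

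Combining these two expressions gives
\begin{equation*}
\langle g,h\rangle=\sum_{i=1}^{n}g_i h_i=-\beta'(\Theta)\sum_{i,j=1}^{n}\alpha'_{ij}(\Theta)\,\frac{\partial\Theta}{\partial x^i}\frac{\partial\Theta}{\partial x^j},
\end{equation*}
and the double sum vanishes identically because it contracts the antisymmetric tensor $\alpha'_{ij}(\Theta)$ with the symmetric tensor $\partial_i\Theta\,\partial_j\Theta$. This establishes $\langle g,h\rangle=0$ pointwise, completing the proof.

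The main (and really only) obstacle is getting the coordinate expression for $\mathrm{d}^\ast\psi$ right, including the antisymmetric extension of $\Psi_{ij}$ and the sign conventions implicit in the Hodge star on $\mathbb{R}^n$; once that computation is in place, the orthogonality reduces to the symmetric–antisymmetric pairing argument and no additional hypothesis on $\beta$ or the $\alpha_{ij}$ is needed.
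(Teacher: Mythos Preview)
Your proposal is correct and follows essentially the same approach as the paper: both arguments dispatch curl-freeness and divergence-freeness via $\mathrm{d}^2=0$ and $(\mathrm{d}^\ast)^2=0$, then compute $\langle g,h\rangle$ in coordinates and use the chain rule on $\Phi=\beta(\Theta)$, $\Psi_{ij}=\alpha_{ij}(\Theta)$ to see the pairwise cancellation. The only cosmetic difference is that you extend $\Psi_{ij}$ antisymmetrically and invoke the symmetric--antisymmetric contraction, whereas the paper keeps the sum over $i<j$ and writes out the cancellation $\partial_i\Theta\,\partial_j\Theta-\partial_j\Theta\,\partial_i\Theta=0$ explicitly.
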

\begin{proof}
	Apparently, $g$ is divergence-free and $f$ is curl-free, since $\mathrm{d}\circ\mathrm{d}=0$ and $\mathrm{d}^\ast\circ\mathrm{d}^\ast=0$. Moreover, we have 
	\begin{equation*}
	g=\mathrm{d}\Phi=\sum_{i=1}^n\frac{\partial\Phi}{\partial x^i}\mathrm{d}x^i
	\end{equation*}
	and
	\begin{equation*}
	\begin{split}
	h=&\mathrm{d}^\ast\psi=(-1)^{n(2+1)+1}\ast\mathrm{d}\ast\sum_{1\le i<j\le n}\Psi_{ij}\mathrm{d}x^i\wedge\mathrm{d}x^j\\
	=&(-1)^{n+1}\ast\mathrm{d}\sum_{1\le i<j\le n}(-1)^{i+j-1}\Psi_{ij}\mathrm{d}x^1\wedge\cdots\wedge\widehat{\mathrm{d}x^i}\wedge\cdots\wedge\widehat{\mathrm{d}x^j}\wedge\cdots\wedge\mathrm{d}x^n\\
	=&(-1)^{n+1}\ast\sum_{1\le i<j\le n}(-1)^j\frac{\partial\Psi_{ij}}{\partial x^i}\mathrm{d}x^1\wedge\cdots\wedge\widehat{\mathrm{d}x^j}\wedge\cdots\wedge\mathrm{d}x^n\\
	&+(-1)^{n+1}\ast\sum_{1\le i<j\le n}(-1)^{i-1}\frac{\partial\Psi_{ij}}{\partial x^j}\mathrm{d}x^1\wedge\cdots\wedge\widehat{\mathrm{d}x^i}\wedge\cdots\wedge\mathrm{d}x^n\\
	=&(-1)^{n+1}\sum_{1\le i<j\le n}(-1)^{n}\frac{\partial\Psi_{ij}}{\partial x^i}\mathrm{d}x^j-\sum_{1\le i<j\le n}(-1)^{n}\frac{\partial\Psi_{ij}}{\partial x^j}\mathrm{d}x^i\\
	=&\sum_{1\le i<j\le n}\left(\frac{\partial\Psi_{ij}}{\partial x^j}\mathrm{d}x^i-\frac{\partial\Psi_{ij}}{\partial x^i}\mathrm{d}x^j\right).
	\end{split}
	\end{equation*} 
	Therefore,
	\begin{equation*}
    \begin{split}
	\left<g,h\right>&=\left<\sum_{k=1}^n\frac{\partial\Phi}{\partial x^k}\mathrm{d}x^k,\sum_{1\le i<j\le n}\left(\frac{\partial\Psi_{ij}}{\partial x^j}\mathrm{d}x^i-\frac{\partial\Psi_{ij}}{\partial x^i}\mathrm{d}x^j\right)\right>\\
	&=\sum_{k=1}^n\sum_{1\le i<j\le n}\frac{\partial\Phi}{\partial x^k}\left(\frac{\partial\Psi_{ij}}{\partial x^j}\delta^{ki}-\frac{\partial\Psi_{ij}}{\partial x^i}\delta^{kj}\right)\\
	&=\sum_{1\le i<j\le n}\left(\frac{\partial\Psi_{ij}}{\partial x^j}\frac{\partial\Phi}{\partial x^i}-\frac{\partial\Psi_{ij}}{\partial x^i}\frac{\partial\Phi}{\partial x^j}\right)\\
	&=\sum_{1\le i<j\le n}\frac{\mathrm{d}\alpha_{ij}}{\mathrm{d}\Theta}\frac{\mathrm{d}\beta}{\mathrm{d}\Theta}\left(\frac{\partial\Theta}{\partial x^j}\frac{\partial\Theta}{\partial x^i}-\frac{\partial\Theta}{\partial x^i}\frac{\partial\Theta}{\partial x^j}\right)\\
	&=0,
    \end{split}
	\end{equation*}
	wehre in the second last equation we have used the condition \eqref{Phi_Psi}. Thus, $f=g+h$ is the desired vector field decomposition.
\end{proof}

\begin{Remark}\label{Rem2}
	The remarkable \emph{Hodge decomposition theorem} (see Corollary 3.4.2 in \cite{Jost2011Riemannian}) says that any vector field can be decomposed into the sum of a curl-free, a divergence-free, and a harmonic vector field, such that they are orthogonal to each other with respect to the $L^2$ inner product. One can hardly overlook the intimate connection between the proposed decomposition and the Hodge decomposition.
\end{Remark}

\section{Existence for linear systems}\label{Sec4}

To prove the existence for linear systems, we shall first recall some basics about algebraic Riccati equation. The contents are mainly based on the book \cite{Gohberg2005Indefinite} of I. Gohberg, P. Lancaster and L. Rodman. A general \emph{continuous algebraic Riccati equation} (CARE) is a quadratic matrix equation of the form
\begin{equation}\label{CARE}
XDX+XA+BX-C=0,
\end{equation}
where $D,A,B,C$ are real or complex matrices of sizes $n\times m,n\times n,m\times m$, and $m\times n$, respectively. Note that the coefficients of CARE \eqref{CARE} fit into a square matrix of size $(n+m)\times(n+m)$:
\begin{equation*}
T=\begin{pmatrix}
A&D\\
C&-B\\
\end{pmatrix}.
\end{equation*}
The solutions of \eqref{CARE} can be fully characterized by the Jordan chains of $T$.
\begin{Definition}[\cite{Gohberg2005Indefinite}]\label{Def3}
	Let $T$ be an $n\times n$ matrix and $\lambda$ an eigenvalue of $T$, if there exist $\eta_i\in\mathbb{C}^n$ satisfying $\eta_i\ne 0,i=1,\cdots,l$, such that
	\begin{equation*}
	T\eta_1=\lambda\eta_1,\ \ \text{and  }\ \ T\eta_i=\lambda\eta_i+\eta_{i-1},\quad i=2,\cdots,l,
	\end{equation*}
	then the ordered $l$-tuple $(\eta_1,\cdots,\eta_l)$ is called a \emph{Jordan chain} of $T$ corresponding to the eigenvalue $\lambda$. 
\end{Definition}

The following constructive theorem then gives the solutions of CARE \eqref{CARE}. 

\begin{Theorem}[\cite{Gohberg2005Indefinite}]\label{Thm5}
	Equation \eqref{CARE} has a solution $X\in\mathbb{C}^{m\times n}$ if and only if there exist vectors $v_1,\cdots,v_n\in\mathbb{C}^{m+n}$ that form a set of Jordan chains for $T$ and, if
	\begin{equation*}
	v_j=\begin{pmatrix}
	y_j\\
	z_j\\
	\end{pmatrix},\quad j=1,2,\cdots,
	\end{equation*}
	where $y_j\in\mathbb{C}^n$, then $y_1,\cdots,y_n$ form a basis for $\mathbb{C}^n$. Furthermore, if 
	\begin{equation*}
	Y=(y_1,\cdots,y_n),\quad Z=(z_1,\cdots,z_n),
	\end{equation*}
	then every solution of \eqref{CARE} has the form $X=ZY^{-1}$ for some set of Jordan chains $v_1,\cdots,v_n$ for $T$ such that $Y$ is nonsingular.
\end{Theorem}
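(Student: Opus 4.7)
The plan is to recast the algebraic equation \eqref{CARE} as an invariant-subspace condition for the block matrix $T$, and then appeal to the Jordan form. The central observation is the identity
\begin{equation*}
T\begin{pmatrix} I \\ X \end{pmatrix}=\begin{pmatrix} A+DX \\ C-BX \end{pmatrix},
\end{equation*}
valid for every $X\in\mathbb{C}^{m\times n}$. Comparing this with $\begin{pmatrix} I \\ X \end{pmatrix}(A+DX)=\begin{pmatrix} A+DX \\ X(A+DX) \end{pmatrix}$ shows that $X$ solves \eqref{CARE} if and only if $C-BX = X(A+DX)$, which is exactly
\begin{equation*}
T\begin{pmatrix} I \\ X \end{pmatrix}=\begin{pmatrix} I \\ X \end{pmatrix}(A+DX).
\end{equation*}
So $X$ solves the CARE precisely when the $n$-dimensional column space of $\begin{pmatrix} I \\ X \end{pmatrix}$ is $T$-invariant, with $A+DX$ representing $T$ on that subspace.

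For the ``only if'' direction, given a solution $X$, I would set $V=\mathrm{col}\begin{pmatrix} I \\ X \end{pmatrix}$ and choose a Jordan basis $v_1,\dots,v_n$ of $T\rvert_V$, which exists by the Jordan decomposition of the restriction; these are automatically Jordan chains of $T$ itself (since eigenvectors/generalized eigenvectors of the restriction remain so for $T$ on the ambient space). Writing $v_j=(y_j,z_j)^\top$, the map $V\to\mathbb{C}^n$ sending $(y,z)^\top\mapsto y$ is an isomorphism because $V$ is the graph of $X$; hence $y_1,\dots,y_n$ are linearly independent, and moreover $z_j=X y_j$ for each $j$. This gives $Z=XY$ with $Y$ invertible, so $X=ZY^{-1}$, delivering both the necessity of the Jordan-chain condition and the representation formula.

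For the ``if'' direction, suppose $v_1,\dots,v_n$ are a concatenation of Jordan chains of $T$ and $Y=(y_1,\dots,y_n)$ is nonsingular. The chain relations of Definition \ref{Def3} package into a single matrix identity
\begin{equation*}
T\begin{pmatrix} Y \\ Z \end{pmatrix}=\begin{pmatrix} Y \\ Z \end{pmatrix}J,
\end{equation*}
where $J$ is the block-diagonal Jordan matrix determined by the chains. Writing this out block by block yields $AY+DZ=YJ$ and $CY-BZ=ZJ$. Setting $X=ZY^{-1}$, the first relation gives $A+DX=YJY^{-1}$, and substituting into the second gives $C-BX=ZJY^{-1}=X\cdot YJY^{-1}=X(A+DX)$, i.e.\ $XDX+XA+BX-C=0$. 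Thus $X$ solves \eqref{CARE}.

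The subtlest point, and the one I would expect to need the most care, is the proper interpretation of ``a set of Jordan chains'': the $n$ vectors need not come from a single chain or a single eigenvalue, but rather from a collection of chains whose concatenation forms a basis of some $n$-dimensional $T$-invariant subspace. Verifying that ``$y_1,\dots,y_n$ is a basis of $\mathbb{C}^n$'' is exactly the condition that this invariant subspace is transverse to $\{0\}\times\mathbb{C}^m$, which is what allows it to be realized as the graph of a linear map $X$. Once that equivalence between $n$-dimensional $T$-invariant subspaces transverse to $\{0\}\times\mathbb{C}^m$ and solutions of the CARE is made precise, both directions follow from the identity displayed above and the existence of Jordan bases, and uniqueness of $X$ in the formula $X=ZY^{-1}$ (independent of the choice of Jordan basis for a given invariant subspace) is immediate from $Z=XY$.
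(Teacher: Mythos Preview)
Your argument is correct and is essentially the standard proof of this result. Note, however, that the paper does not supply its own proof of Theorem~\ref{Thm5}: the statement is quoted from \cite{Gohberg2005Indefinite} and used as a black box, so there is no ``paper's proof'' to compare against beyond the cited reference. Your graph-subspace formulation, the identity
\[
T\begin{pmatrix} I \\ X \end{pmatrix}=\begin{pmatrix} I \\ X \end{pmatrix}(A+DX),
\]
and the passage to a Jordan basis of the restriction $T|_V$ are exactly the ingredients used in the Gohberg--Lancaster--Rodman treatment (cf.\ Proposition~14.4.1 there, which the present paper invokes in the proof of Lemma~\ref{Thm6}). The one point you flag as subtle---that ``a set of Jordan chains'' means a union of full chains spanning an $n$-dimensional $T$-invariant subspace transverse to $\{0\}\times\mathbb{C}^m$---is indeed the correct reading and is what makes both directions go through.
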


Now we can prove the existence of the proposed vector field decomposition for linear systems. Note that for linear systems we have $f(x)=Fx$ for a certain $n\times n$ matrix $F\in\mathbb{R}^{n\times n}$. Let $g(x)=Gx$ for some $G\in\mathbb{R}^{n\times n}$, then by the irrotationality of $g$ and equation \eqref{curl}, we find that $G_{ij}=G_{ji}$, i.e., $G$ is a symmetric matrix. On the other hand, according to \eqref{div} we find that
\begin{equation}\label{Eq4.1}
\mathrm{tr}G=\mathrm{tr}F,
\end{equation}
since $h=f-g$ is divergence free. Moreover, the orthogonality condition \eqref{OC} becomes to $x^TG^T(F-G)x=0$ which implies that $G^T(F-G)$ is a skew-symmetric matrix, that is $G^T(F-G)=-(F-G)^TG$, or equivalently
\begin{equation}\label{SARE}
GF+F^TG=2G^2,
\end{equation}
where we have utilized the symmetry of matrix $G$. So we are going to find a symmetric solution of \eqref{SARE} subjecting to constraint \eqref{Eq4.1}.

Note that equation \eqref{SARE} is a special form of the algebraic Riccati equation \eqref{CARE}. Indeed, if we take $m=n$, $D=2I,X=G,A=-F,B=-F^T$, and $C=0$, then equation \eqref{CARE} becomes to equation \eqref{SARE}. Therefore, all the solutions of \eqref{SARE} can be found by Theorem \ref{Thm5}. Among these solutions, we have to prove that there exists a symmetric one such that constraint \eqref{Eq4.1} is fulfilled. To begin with, let us look at a simple illustrative example.

\begin{Example}\label{Exa2}
	Let $D=2I_{2\times 2},C=0_{2\times 2}, A=\begin{pmatrix}
	1&1\\
	0&0\\
	\end{pmatrix}$, and $B=A^T$, then $T=PJP^{-1}$ with 
	\begin{equation*}
	J=\begin{pmatrix}
	0&1&0&0\\
	0&0&0&0\\
	0&0&-1&0\\
	0&0&0&1\\
	\end{pmatrix},\quad\text{and}\quad
	P=\begin{pmatrix}
	2&-2&2&0\\
	-2&0&0&2\\
	0&0&0&1\\
	0&-1&0&1\\
	\end{pmatrix},
	\end{equation*}
	be the Jordan normal form and the corresponding transformation matrix of $T$, respectively. There are $4$ sets of Jordan chains for $T$ satisfying the conditions of Theorem \ref{Thm5}, i.e., $(P_1,P_2),(P_1,P_3),(P_1,P_4),(P_3,P_4)$, where $P_i$ denotes the $i$-th column of $P$. Hence, there are four solutions for equation \eqref{CARE}, i.e.,
	\begin{equation*}
	X_1=\begin{pmatrix}
	0&0\\
	\frac{1}{2}&\frac{1}{2}\\
	\end{pmatrix},\quad X_2=\begin{pmatrix}
	0&0\\
	0&0\\
	\end{pmatrix},\quad X_3=\begin{pmatrix}
	\frac{1}{2}&\frac{1}{2}\\
	\frac{1}{2}&\frac{1}{2}\\
	\end{pmatrix},\quad\text{and}\quad X_4=\begin{pmatrix}
	0&\frac{1}{2}\\
	0&\frac{1}{2}\\
	\end{pmatrix}.
	\end{equation*}
	Observe that solution $X_3$ is the one with the same trace as $A$, and it corresponds to the eigenvalues $0$ and $1$ of $T$ which are exactly the eigenvalues of $A$. 
\end{Example}

Indeed, the observation is justified by the following lemma.
\begin{Lemma}\label{Thm6}
	Let $X$ be a solution of equation \eqref{CARE} and $v_1,\cdots,v_n\in\mathbb{C}^{m+n}$ the corresponding set of Jordan chains for $T$ satisfying the conditions of Theorem \ref{Thm5}. If $\lambda_1,\cdots,\lambda_n$ are the eigenvalues of $T$ corresponding to $v_1,\cdots,v_n$, then we have
	\begin{equation*}
	\mathrm{tr}\left(A+DX\right)=\sum_{i=1}^n{\lambda_i}.
	\end{equation*}
\end{Lemma}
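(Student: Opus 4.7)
The plan is to package the Jordan chain conditions into a single matrix equation and then read off the claim from a similarity. Stack the chain vectors into $V=(v_1,\ldots,v_n)=\begin{pmatrix}Y\\Z\end{pmatrix}\in\mathbb{C}^{(n+m)\times n}$, where $Y,Z$ are as in Theorem \ref{Thm5}. The relations in Definition \ref{Def3}, applied successively to each Jordan chain listed among $v_1,\ldots,v_n$, are equivalent to the single identity
\begin{equation*}
TV=VJ,
\end{equation*}
where $J\in\mathbb{C}^{n\times n}$ is a block-diagonal Jordan-type matrix carrying $\lambda_1,\ldots,\lambda_n$ on the diagonal and $0/1$ superdiagonal entries dictated by the chain structure. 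In particular $\mathrm{tr}(J)=\sum_{i=1}^n\lambda_i$.

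Next I would expand the block equation $TV=VJ$ using $T=\begin{pmatrix}A&D\\C&-B\end{pmatrix}$. The top block reads
\begin{equation*}
AY+DZ=YJ.
\end{equation*}
Since $X=ZY^{-1}$ (by Theorem \ref{Thm5}), one has $Z=XY$, so the displayed equation becomes
\begin{equation*}
(A+DX)Y=YJ,
\end{equation*}
and by the nonsingularity of $Y$ asserted in Theorem \ref{Thm5}, $A+DX=YJY^{-1}$, i.e.\ $A+DX$ is similar to $J$.

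Taking traces then yields
\begin{equation*}
\mathrm{tr}(A+DX)=\mathrm{tr}(J)=\sum_{i=1}^n\lambda_i,
\end{equation*}
as desired. The only step that requires a bit of care is the first one, translating Definition \ref{Def3} into $TV=VJ$: one must verify that within each Jordan chain $(\eta_1,\ldots,\eta_l)$ listed contiguously among the $v_j$, the defining equations $T\eta_1=\lambda\eta_1$ and $T\eta_i=\lambda\eta_i+\eta_{i-1}$ correspond precisely to a Jordan block of size $l$ with eigenvalue $\lambda$ appearing in $J$ in the matching position; once this bookkeeping is done, the rest of the argument is a direct block-matrix computation and a similarity-invariance of trace, so I do not expect a substantive obstacle.
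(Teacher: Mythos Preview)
Your proof is correct and follows essentially the same idea as the paper's: both establish that $A+DX$ is similar (via $Y$) to an upper-triangular Jordan-type matrix carrying $\lambda_1,\ldots,\lambda_n$ on the diagonal, and then take traces. Your execution is more direct: you read off $(A+DX)Y=YJ$ straight from the top block of $TV=VJ$, whereas the paper first invokes an external result (Proposition~14.4.1 in \cite{Gohberg2005Indefinite}) to get $T\begin{pmatrix}I\\X\end{pmatrix}=\begin{pmatrix}I\\X\end{pmatrix}(A+DX)$, then conjugates by the full Jordan transformation $P$ of $T$ and argues that the relevant columns of $P^{-1}v$ pick out a submatrix $\tilde J$ of the full Jordan form. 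Your route avoids both the external citation and the (slightly delicate) assumption that the given chains $v_j$ coincide with specific columns of a fixed $P$; the paper's route, on the other hand, makes explicit how the eigenvalues $\lambda_i$ sit inside the spectrum of $T$.
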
 
\begin{proof}
	Since $X$ is a solution of \eqref{CARE}, we have, according to Proposition 14.4.1 in \cite{Gohberg2005Indefinite}, that the equation 
	\begin{equation*}
	T\begin{pmatrix}
	I\\
	X\\
	\end{pmatrix}=\begin{pmatrix}
	I\\
	X\\
	\end{pmatrix}Q
	\end{equation*}
	holds with $Q=A+DX$. Let $J$ be the Jordan normal form of $T$ and $P$ the corresponding transformation matrix, i.e. $T=PJP^{-1}$. Then, multiplying both sides of the above equation by $P^{-1}$ on the left yields 
	\begin{equation*}
	JP^{-1}\begin{pmatrix}
	I\\
	X\\
	\end{pmatrix}=P^{-1}\begin{pmatrix}
	I\\
	X\\
	\end{pmatrix}Q.
	\end{equation*}
	From Theorem \ref{Thm5}, we know that 
	\begin{equation*}
	\begin{pmatrix}
	I\\
	X\\
	\end{pmatrix}=vY^{-1},
	\end{equation*} 
	where $v=\left(v_1,\cdots,v_n\right)$ and $Y$ is the same as defined in Theorem \ref{Thm5}. Therefore,
	\begin{equation}\label{JP}
	JP^{-1}v=P^{-1}v\tilde{Q}
	\end{equation}
	with $\tilde{Q}=Y^{-1}QY$. Since $v_1,\cdots,v_n$ form a set of Jordan chains for $T$, there exist $1\le i_1,\cdots,i_n\le m+n$ such that $v_j=P_{i_j},1\le j\le n$, where $P_{i_j}$ is the $i_j$-th column of matrix $P$, and $i_j\ne i_k$ whenever $j\ne k$. Then we have $P^{-1}v=\left(e_{i_1},\cdots,e_{i_n}\right)$, where $e_1,\cdots,e_{m+n}$ are the standard bases of $\mathbb{R}^{m+n}$. Consequently, equation \eqref{JP} is equivalent to $\tilde{J}=\tilde{Q}$, where $\tilde{J}=\left(J_{i_ki_l}\right),1\le k,l,\le n$ is the matrix consists of entries of $J$ at $i_k$-th row and $i_l$-th column. It is clear that $\tilde{J}$ is itself in Jordan normal form since $v_1,\cdots,v_n$ form a set of Jordan chains of $T$. Moreover, the diagonal entries of $\tilde{J}$ are exactly the eigenvalues $\lambda_1,\cdots,\lambda_n$, hence $\mathrm{tr}\left(A+DX\right)=\mathrm{tr}\tilde{Q}=\mathrm{tr}\tilde{J}=\sum_{i=1}^n{\lambda_i}$, as expected.
\end{proof}

\begin{Remark}\label{Rem3}
	In the special case of \eqref{SARE}, the eigenvalues of $F$ are also eigenvalues of $T$. If corresponding to them there exists a solution $G$, then the lemma says that $\mathrm{tr}(-F+2G)=\mathrm{tr}F$ which is exactly \eqref{Eq4.1}.
\end{Remark}

Now, we prove the existence of the proposed vector field decomposition for a special class of linear systems where $F$ satisfies some additional conditions. 
\begin{Theorem}\label{Thm7}
	Let $F\in\mathbb{R}^{n\times n}$ be a real matrix and $\lambda_1,\cdots,\lambda_n$ the eigenvalues of it. If $\lambda_i+\lambda_j\ne0$ for any $1\le i,j\le n$, then equation \eqref{SARE} always has a real symmetric solution $G$ such that $\mathrm{tr}G=\mathrm{tr}F$.
\end{Theorem}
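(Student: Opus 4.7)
The plan is to reduce the quadratic matrix equation \eqref{SARE} to a linear Lyapunov equation and then invoke classical Sylvester--Rosenblum theory. If $G$ is any invertible solution of $GF+F^{T}G=2G^{2}$, then left- and right-multiplication by $W=G^{-1}$ yields $FW+WF^{T}=2I$. Conversely, any real symmetric invertible $W$ satisfying this Lyapunov equation produces a real symmetric $G=W^{-1}$ that solves \eqref{SARE}. The hypothesis $\lambda_{i}+\lambda_{j}\neq 0$ is exactly the condition that the spectra of $F$ and $-F^{T}$ be disjoint, so this Lyapunov equation has a unique solution.

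Concretely, I would proceed as follows. First, invoke the Sylvester--Rosenblum theorem to obtain the unique $W\in\mathbb{R}^{n\times n}$ with $FW+WF^{T}=2I$; uniqueness together with the observation that $W^{T}$ and $\overline{W}$ also satisfy the equation forces $W$ to be real and symmetric. Next, I would show $W$ is invertible: if $Wv=0$, then symmetry gives $v^{T}W=0$ as well, and contracting the Lyapunov equation on both sides against $v$ gives $0=v^{T}(FW+WF^{T})v=2\|v\|^{2}$, so $v=0$. Defining $G:=W^{-1}$, which is automatically real and symmetric, I would verify \eqref{SARE} by multiplying the Lyapunov equation by $G$ on the left and on the right. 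For the trace identity, multiplying $FW+WF^{T}=2I$ on the right by $G$ alone produces $F+WF^{T}G=2G$; cyclicity of the trace together with $GW=I$ then gives $\mathrm{tr}(WF^{T}G)=\mathrm{tr}(F^{T})=\mathrm{tr}\,F$, whence $\mathrm{tr}\,G=\mathrm{tr}\,F$.

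It is worth noting how this construction fits into the Jordan-chain picture of Theorem \ref{Thm5}. The block-triangular matrix $T$ introduced for \eqref{SARE} has spectrum $\sigma(T)=\{-\lambda_{i}\}\cup\{\lambda_{i}\}$, disjoint under the hypothesis, and the graph subspace consisting of all vectors $(Wz,z)^{T}$ with $z\in\mathbb{C}^{n}$ is precisely the $T$-invariant subspace associated with $\sigma(F^{T})=\{\lambda_{1},\dots,\lambda_{n}\}$. The Jordan chains used to parameterize $G=W^{-1}$ via Theorem \ref{Thm5} thus correspond exactly to the eigenvalues $\lambda_{1},\dots,\lambda_{n}$ of $F^{T}$, and Lemma \ref{Thm6} together with Remark \ref{Rem3} recovers $\mathrm{tr}\,G=\mathrm{tr}\,F$ without the direct computation above.

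The main obstacle I expect is recognizing the right substitution. The general machinery of Theorem \ref{Thm5} would require producing $n$ Jordan chain vectors of $T$ whose top blocks are linearly independent, which is delicate in the presence of nontrivial Jordan blocks and does not obviously guarantee either symmetry or reality. The substitution $W=G^{-1}$ converts all of this into familiar unique-solvability for a linear Sylvester-type equation, at the cost of having to verify invertibility of $W$. That invertibility step is the only genuinely nontrivial point, and it relies crucially on the symmetry of $W$ to turn $Wv=0$ into $v^{T}Wv=0$.
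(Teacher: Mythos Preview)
Your proof is correct and follows essentially the same route as the paper: both reduce \eqref{SARE} to the Sylvester equation $FW+WF^{T}=2I$ (the paper writes it as $XF^{T}+FX=2I$), use uniqueness to deduce that the solution is real and symmetric, verify its invertibility via the associated quadratic form, and then set $G=W^{-1}$. The paper packages this inside the Jordan-chain framework of Theorem~\ref{Thm5} and appeals to Lemma~\ref{Thm6}/Remark~\ref{Rem3} and an external reference for the trace identity and reality, whereas you extract all three directly from the Sylvester equation; your version is the more streamlined of the two.
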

\begin{proof}	
	Let $-F=PJP^{-1}$ and $F^T=UKU^{-1}$, where $J,K$ are Jordan normal forms and $P,U$ the corresponding transformation matrices of $-F$ and $F^T$, respectively. Since $\lambda_i+\lambda_j\ne 0$ for $1\le i,j\le n$, $-F$ and $F^T$ have no eigenvalue in common. Thus, there is an invertible matrix $W$ such that $T=WLW^{-1}$ with
	\begin{equation*}
	T=\begin{pmatrix}
	-F&2I\\
	0&F^T\\
	\end{pmatrix}
	\quad\text{and}\quad L = \begin{pmatrix}
	K&0\\
	0&J
	\end{pmatrix}
	\end{equation*}
	a Jordan normal form of $T$. Indeed, we can choose  
	\begin{equation*}
	W=\begin{pmatrix}
	W_{11}&W_{12}\\
	W_{21}&W_{22}
	\end{pmatrix}
	\end{equation*}
	with $W_{11}=XU,W_{12}=P,W_{21}=U,W_{22}=0$, and $X$ the unique (See \cite{Luenberger1965Invertible,Shestopal1976Solution} for the existence and uniqueness.) solution of the \emph{Sylvester equation}
	\begin{equation}\label{SE}
	XF^T+FX=2I.
	\end{equation} 
	
	We claim that the solution $X$ is invertible. Indeed, $X$ is symmetric since apparently $X^T$ is also a solution of \eqref{SE}. Thus, \eqref{SE} becomes to 
	\begin{equation*}
	(FX)^T+FX=2I,
	\end{equation*}
	from which we conclude that $FX$ and consequently $X$ is invertible. But then $W_{11}$ is also invertible and the condition of Theorem \ref{Thm5} is satisfied with $Y=W_{11}$ and $Z=W_{21}=U$. Therefore $G=ZY^{-1}=X^{-1}$ is a solution of \eqref{SARE}. Note that $\mathrm{tr}G=\mathrm{tr}F$ (See Remark \ref{Rem3}) and $G$ is symmetric since $X$ is. And the reality of $G$ is guaranteed by Theorem 14.10.2 in \cite{Gohberg2005Indefinite}. So the proof is complete.
\end{proof} 

Based on this result, we show that the proposed vector field decomposition exists for arbitrary linear system.
\begin{Theorem}\label{Thm8}
	For any real matrix $F\in\mathbb{R}^{n\times n}$, equation \eqref{SARE} always has a real symmetric solution $G$ such that $\mathrm{tr}F=\mathrm{tr}G$.
\end{Theorem}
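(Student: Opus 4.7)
My plan is to reduce the general case to Theorem 7 by a perturbation-and-limit argument. Since the hypothesis of Theorem 7 restricts only finitely many algebraic conditions on the eigenvalues of $F$, a small shift $F_\epsilon = F + \epsilon I$ will satisfy that hypothesis for all but finitely many $\epsilon \in \mathbb{R}$. Concretely, the eigenvalues of $F_\epsilon$ are $\lambda_i + \epsilon$, so $\lambda_i(F_\epsilon) + \lambda_j(F_\epsilon) = 0$ forces $\epsilon = -(\lambda_i+\lambda_j)/2$, which rules out at most $n^2$ values. Pick a sequence $\epsilon_k \to 0$, $\epsilon_k \neq 0$, avoiding these bad values; Theorem 7 then supplies, for each $k$, a real symmetric $G_k$ solving
\[
G_k F_{\epsilon_k} + F_{\epsilon_k}^T G_k = 2G_k^2,\qquad \mathrm{tr}(G_k) = \mathrm{tr}(F_{\epsilon_k}) = \mathrm{tr}(F) + n\epsilon_k.
\]

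The step I expect to be the crux is producing a limit, and for that the essential point is that the sequence $\{G_k\}$ is automatically uniformly bounded. To see this I would take the trace of the original equation for any real symmetric solution $G$ of $GF + F^T G = 2G^2$. Cyclicity of the trace together with $G = G^T$ gives $\mathrm{tr}(GF) = \mathrm{tr}(F^T G) = \langle F, G\rangle_F$ in the Frobenius inner product, while $\mathrm{tr}(G^2) = \|G\|_F^2$. The equation therefore reduces to $\langle F, G\rangle_F = \|G\|_F^2$, and Cauchy--Schwarz immediately yields the a priori bound $\|G\|_F \le \|F\|_F$. Applied to each $G_k$, this gives $\|G_k\|_F \le \|F_{\epsilon_k}\|_F$, which is uniformly bounded as $\epsilon_k\to 0$.

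With a uniform bound in hand, the Bolzano--Weierstrass theorem provides a subsequence $G_{k_j} \to G^*$. The limit $G^*$ is real and symmetric as a limit of real symmetric matrices, and $\mathrm{tr}(G^*) = \lim_j \mathrm{tr}(G_{k_j}) = \mathrm{tr}(F)$. Passing to the limit in the perturbed equation, using $F_{\epsilon_{k_j}}\to F$ and continuity of the map $(G,F)\mapsto GF + F^T G - 2G^2$, shows that $G^*$ solves \eqref{SARE}, completing the argument. The only nontrivial ingredient is the trace identity that yields the uniform bound; once that is observed, perturbation plus compactness delivers the result.
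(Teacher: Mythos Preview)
Your argument is correct, and the route is genuinely different from the paper's. The paper proceeds constructively: it partitions the eigenvalues of $F$ into a set $\sigma_1$ on which Theorem~\ref{Thm7} applies and a complementary set $\sigma_2$ consisting of pairs summing to zero, uses the real Schur decomposition to bring $F$ into block lower-triangular form with the $\sigma_1$-block isolated, applies Theorem~\ref{Thm7} to that block, and then pads with zeros to obtain $G$. Your approach replaces all of this structural work by a perturbation-and-compactness argument, and the key enabling observation---that tracing the equation $GF+F^TG=2G^2$ with $G=G^T$ yields $\|G\|_F^2=\langle F,G\rangle_F\le\|F\|_F\|G\|_F$, hence the universal a~priori bound $\|G\|_F\le\|F\|_F$---is both elegant and not present in the paper. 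The trade-off is that the paper's proof is explicit (one can read off $G$ from the Schur form and the solution of a smaller Sylvester equation), whereas yours is purely existential; on the other hand, your argument is shorter, avoids the eigenvalue bookkeeping and the auxiliary reformulation via \eqref{QFQ}, and the bound $\|G\|_F\le\|F\|_F$ is of independent interest.
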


\begin{proof}
	Suppose \eqref{SARE} has a real symmetric solution $G$ such that $\mathrm{tr}F=\mathrm{tr}G$. Then there exists an orthogonal matrix $Q$ such that $G=Q\Lambda Q^T$, where $\Lambda$ is a diagonal matrix such that $\mathrm{tr}\Lambda=\mathrm{tr}F$. Plugging this into \eqref{SARE} and multiplying both sides on the left by $Q^T$ and right by $Q$ yield
	\begin{equation*}
	\Lambda Q^TFQ+Q^TF^TQ\Lambda =2\Lambda^2,
	\end{equation*}
	which implies that 
	\begin{equation}\label{QFQ}
	\Lambda Q^TFQ=\Lambda^2+B
	\end{equation}
	for a certain skew-symmetric matrix $B$.
	
	Conversely, if there exist an orthogonal matrix $Q$, a skew-symmetric matrix $B$, and a diagonal matrix $\Lambda$ satisfying $\mathrm{tr}\Lambda=\mathrm{tr}F$ such that equation \eqref{QFQ} holds, then \eqref{SARE} is satisfied with $G=Q\Lambda Q^T$. Thus, to prove the theorem, it suffices to show that for any matrix $F\in\mathbb{R}^{n\times n}$, such matrices $Q,B$ and $\Lambda$ do exist. To this end, we assume the set of eigenvalues of $F$ is $\sigma=\{\lambda_1,\cdots,\lambda_n\}$ and construct two subsets $\sigma_1$ and $\sigma_2$ of it iteratively as follows.
	\begin{enumerate}[step 1.]
		\item Let $\sigma_1=\sigma$ and $\sigma_2=\emptyset$.
		\item If we can find a pair of eigenvalues $\lambda_i$ and $\lambda_j$ ($i,j$ may be equal) in $\sigma_1$ such that $\lambda_i+\lambda_j=0$, then go to step \ref{step3}. Else, go to step \ref{step4}.\label{step2}
		\item Remove $\lambda_i$ and $\lambda_j$ from $\sigma_1$ and put them into $\sigma_2$, return to Step \ref{step2}.\label{step3}
		\item End!\label{step4}
	\end{enumerate}
	
	Apparently, the process ends after a finite number of steps and yields two sets $\sigma_1,\sigma_2$ with the following properties:
	\begin{enumerate}[1)]
		\item If $\lambda\in\sigma_1$ (or $\in\sigma_2$), then $\bar{\lambda}\in\sigma_1$ (or $\in\sigma_2$, respectively) too, where $\bar{\lambda}$ denotes the complex conjugate of $\lambda$.
		\item $\lambda_i+\lambda_j\ne 0$ ($i,j$ may be equal) for any two eigenvalues $\lambda_i,\lambda_j\in\sigma_1$.
		\item $\sum_{\lambda\in\sigma_2}{\lambda}=0$.
		\item $\sigma=\sigma_1\cup\sigma_2$.
	\end{enumerate}
	
	According to the well-known real Schur decomposition theorem (See Theorem 7.4.1 in \cite{Golub2013Matrix}), there exists an orthogonal matrix $U\in\mathbb{R}^{n\times n}$ such that   
	\begin{equation*}
	U^TFU=\begin{pmatrix}
	R_{11}&0&\cdots&0\\
	R_{21}&R_{22}&\cdots&0\\
	\vdots&\vdots&\ddots&\vdots\\
	R_{m1}&R_{m2}&\cdots&R_{mm}\\
	\end{pmatrix}\eqqcolon R\in\mathbb{R}^{n\times n},
	\end{equation*}
	where each $R_{ii}$ is a $1\times 1$ matrix or a $2\times 2$ matrix having complex conjugate eigenvalues. In view of properties 1)-4) of $\sigma_1$ and $\sigma_2$, we can, without loss of generality, always assume that $R$ can be divided into blocks as 
	\begin{equation*}
	R=\begin{pmatrix}
	\mathcal{R}_{11}&0\\
	\mathcal{R}_{21}&\mathcal{R}_{22}\\
	\end{pmatrix},
	\end{equation*}
	where $\mathcal{R}_{11}\in\mathbb{R}^{p\times p}$ is a matrix with $\sigma_1$ as its set of eigenvalues (which implies that $\mathrm{tr} \mathcal{R}_{11}=\mathrm{tr} F$ in view of property 3)) and $\mathcal{R}_{22}\in\mathbb{R}^{(n-p)\times(n-p)}$ a matrix with $\sigma_2$ as its set of eigenvalues. Since $\mathcal{R}_{11}$ satisfies the condition of Theorem \ref{Thm7}, there exist, according to the discussion above, an orthogonal matrix $\tilde{Q}$, a diagonal matrix $\tilde{\Lambda}$, and a skew-matrix $\tilde{B}$ such that $\tilde{\Lambda}\tilde{Q}^T\mathcal{R}_{11}\tilde{Q}=\tilde{\Lambda}^2+\tilde{B}$. Then, a simple calculation shows that \eqref{QFQ} is satisfied with
	\begin{equation*}
	\Lambda=\begin{pmatrix}
	\tilde{\Lambda}&0\\
	0&0\\\end{pmatrix},\quad B=\begin{pmatrix}
	\tilde{B}&0\\
	0&0\\
	\end{pmatrix},\quad\text{and}\quad Q=U\begin{pmatrix}
	\tilde{Q}&0\\
	0&I\\
	\end{pmatrix},
	\end{equation*} 
	which completes the proof.
\end{proof}

\section{Conclusions}\label{Sec5}
In this paper, a novel approach for constructing Lyapunov functions was proposed. The advantage of this approach is that under some additional conditions, the resulting Lyapunov function can give an accurate characterization of the attraction domain of the given dynamical system. The approach make use of a specific decomposition of the defining vector field of the system. For nonlinear systems, a sufficient condition for the existence of such decompositions was given. For linear systems, it was proved that such decompositions always exist and can be found explicitly by solving a specific algebraic Riccati equation.

\bibliography{ref}

\end{document}